\def\BibTeX{{\rm B\kern-.05em{\sc i\kern-.025em b}\kern-.08em
    T\kern-.1667em\lower.7ex\hbox{E}\kern-.125emX}}
\newtheorem{remark}{\bf Remark}[section]
\newtheorem{definition}{\bf Definition}[section]
\newtheorem{assumption}{\bf Assumption}[section]
\newtheorem{lemma}{\bf Lemma}[section]
\newtheorem{proposition}{\bf Proposition}[section]
\newtheorem{theorem}{\bf Theorem}[section]
\newcommand{\Real}{\mathbb R}
\newcommand{\norm}[1]{\left\Vert#1\right\Vert}
\begin{document}
\title{\vspace{-1ex}\LARGE\textbf{Model-Assisted Probabilistic Safe Adaptive Control With Meta-Bayesian Learning}\footnote{This manuscript is submitted for potential publication. Reviewers can use this version in peer review.}}

\author[1]{\normalsize Shengbo Wang}
\author[2]{\normalsize Ke Li}
\author[3]{\normalsize Yin Yang}
\author[4]{\normalsize Yuting Cao}
\author[5]{\normalsize Tingwen Huang}
\author[6]{\normalsize Shiping Wen}
\affil[1]{\normalsize College of Computer Science and Engineering, University of Electronic Science and Technology of China, Chengdu 611731, China}
\affil[2]{\normalsize Department of Computer Science, University of Exeter, EX4 4QF, Exeter, UK}
\affil[34]{\normalsize  College of Science and Engineering, Hamad Bin Khalifa University, 5855, Doha, Qatar}
\affil[5]{\normalsize Science Program, Texas A \& M University at Qatar, Doha 23874, Qatar}
\affil[6]{\normalsize Australian AI Institute, Faculty of Engineering and Information Technology, University of Technology Sydney, NSW 2007, Australia }

\affil[$\ast$]{\normalsize Email: \texttt{shnbo.wang@foxmail.com},  
\texttt{k.li@exeter.ac.uk},
\texttt{shiping.wen@uts.edu.au}}

\date{}





\maketitle

\vspace{-3ex}
{\normalsize\textbf{Abstract: } Breaking safety constraints in control systems can lead to potential risks, resulting in unexpected costs or catastrophic damage. Nevertheless, uncertainty is ubiquitous, even among similar tasks. In this paper, we develop a novel adaptive safe control framework  that integrates meta learning, Bayesian models, and control barrier function (CBF) method. Specifically, with the help of CBF method, we learn the inherent and external uncertainties by a unified adaptive Bayesian linear regression (ABLR) models, which consists of a forward neural network (NN) and a Bayesian output layer. Meta learning techniques are leveraged to pre-train the NN weights and priors of the ABLR model using data collected from historical similar tasks. For a new control task, we refine the meta-learned models using a few samples, and introduce pessimistic confidence bounds into CBF constraints to ensure safe control. Moreover, we provide theoretical criteria to guarantee probabilistic safety during the control processes. To validate our approach, we conduct comparative experiments in various obstacle avoidance scenarios. The results demonstrate that our algorithm significantly improves the Bayesian model-based CBF method, and is capable for efficient safe exploration even with multiple uncertain constraints. }

{\normalsize\textbf{Keywords: } }
Safety-critical control, control barrier functions, meta learning, adaptive Bayesian linear regression, neural networks, safe exploration.

\section{Introduction}
\label{sec:introduction}

Despite the existence of numerous designs, significant research efforts, and successful applications in the field of control systems, the development of a reliable and secure controller that combines robust theoretical foundations with exceptional performance continues to present a formidable challenge. This challenge has captured the attention of researchers from diverse fields, including robotics \cite{GarciaF15} and healthcare \cite{CoronatoNPP20}, among others. In the context of control systems, safety is evaluated based on the system state. In this study, we focus on \textit{probabilistic safe control}, wherein a safe controller is expected to prevent the system from entering hazardous states with an acceptable probability \cite{BerkenkampTS017, CastanedaCZTS21, WabersichHCZ22}. Due to the intricate nature of calculating the safe state space for a general dynamics-driven system, ensuring safety by designing or learning a safe controller is rather complex. Existing safe control strategies include model predictive control \cite{BrunkeGHYZPS22}, reachability analysis \cite{BansalCHT17}, and control barrier function (CBF) method \cite{AmesXGT17}. In our research, we build upon the CBF method, which ensures that the system state remains within safe regions by defining a forward invariant set. This set is a subset of the safe region and restricts the system state within its boundaries. Furthermore, we take into account the presence of uncertainty, which not only have a more significant impact on the system state than small disturbances \cite{Jankovic18}, and does not have an analytical format as well \cite{WangLWSZH23}. Prior work has addressed this issue in \cite{TaylorSYA20,FanNTAAT20,BrunkeZS22,DhimanKFA23} by introducing non-structured learning-based techniques. However, these methods, as we will discuss later, suffer from limitations in terms of real-time safe deployment efficiency and practicality in data-deficient online control tasks.

\par The learning-based safe control methods face a dilemma. To effectively model uncertainty, it is preferable to sample data randomly and independently from the control space. However, considering safety issues, driving an unknown system for data collection can lead to a high risk of unsafe behaviors. In addition, in the context of control system, data collection is typically based on a few state trajectories, which means the data is not independently distributed in most cases. This is exemplified in Fig. \ref{fig:illustrationexample}, where we aim to estimate the unsafe region by sampling data from a given safe trajectory. Previous works \cite{FanNTAAT20, DhimanKFA23} have utilized Gaussian processes (GPs) \cite{GPML} to recognize the safe set. However, this approach often leads to conservative estimations, as indicated by the large portion of the safe area labeled as 'unsafe' in the green covered region. To address this issue, researchers have explored the concept of safe exploration, which involves actively expanding system trajectories into unknown regions for better estimation \cite{KollerBT018,WachiSYO18,LiuSCAY20,ChakrabartyDCR22}. However, we argue that on-line exploration introduces additional time costs in the current control task. Moreover, it relies on a more ideal control environment than the practical one, such as the smoothness of the safety measurements, to minimize the risk of unsafe behaviors. It also struggles to ensure safety effectively when uncertainty exists.

We employ the concept of meta learning to effectively address inherent uncertainties in the current control task by leveraging empirical knowledge learnt from historical similar tasks \cite{FinnAL17, RothfussHCK21}. In the field of control systems, meta learning has been utilized for various purposes, such as modeling system dynamics \cite{HarrisonSP18}, conducting control design \cite{ArcariMSCFHZ23}, and adapting to new environments in the presence of uncertainties \cite{MajumdarFS21}. With regards to safety, meta learning can significantly enhance the estimation of unsafe regions, as depicted by the yellow covered region in Fig. \ref{fig:illustrationexample}. Despite its empirical effectiveness, there exists a theoretical gap that hinders the application of meta learning in facilitating safe control tasks \cite{ZhangCFLJ20,LewSHBP22,BennettMK23}. In this work, we aim to present a novel safe control framework combining the CBF method and meta learning techniques. Moreover, we will explore its theoretical criteria to ensure probabilistic safety mathematically.

\begin{figure}[tb]
    \centering
    \scalebox{.83}{\includegraphics{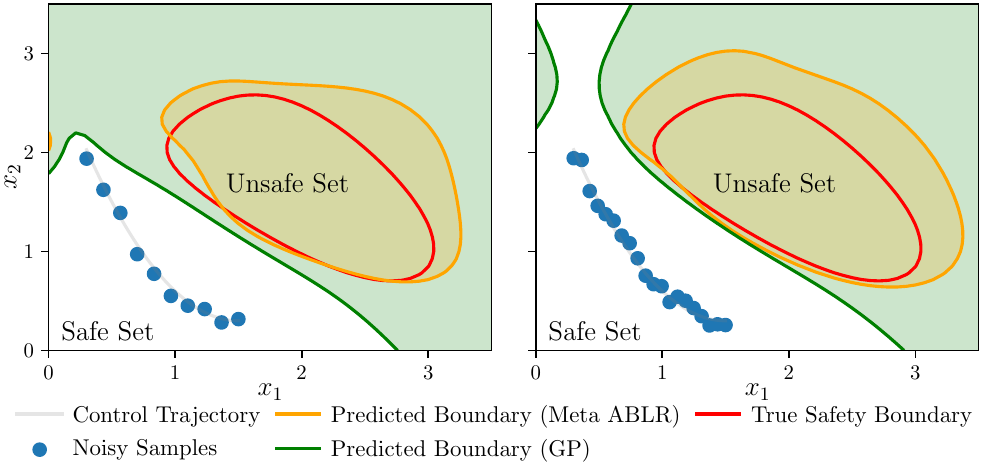}}
    \caption{Illustration of the meta-learning effect for safety boundary estimation. We construct a GP model with non-informative priors, and an ABLR model with meta-learned priors. Both models are trained on the same (Left) $10$ random samples, and (Right) $20$ random samples from a given trajectory. Please see Appendix \ref{appendix:illustration} for detailed settings and illustrations.
    }
    \label{fig:illustrationexample}
\end{figure}

\subsection{Related Works}
\label{section:relatedwork}


\subsubsection{CBF-based Adaptive Control Design}
\label{section:relatedwork_adaptivecbf}
In order to ensure safety, it is inevitable to introduce more conservativeness into a controller when model uncertainty exists. Taking the worst-case robustness into account results in the most conservative controllers \cite{Jankovic18}. This method is not appealing to researchers because of its infeasibility for general cases \cite{XiaoBC22}. To solve this problem in a more elegant way, it is necessary to further assumed that the uncertainty conforms to specific forms. 

For structured parametric uncertainty, i.e., the uncertainty only exists in some unknown parameters of a structured function, the ideas from the adaptive control design are adopted to obtain a better safe controller \cite{TaylorA20, XiongZTX23}. To achieve parameter identification simultaneously, an adaptive tuning law is developed for finite-time estimation \cite{BlackAP21}, which is further improved for a less conservative design in \cite{WangLWSZH23}. 
For \emph{unstructured uncertainty} as studied in this paper, accurate estimation or even point-wise safety may not be guaranteed. We roughly divide these works into two categories: (i) directly modelling the unknown dynamics, and (ii) merely reducing the uncertainty in CBF constraints. For the first case, (deep) neural networks (NNs) are developed to build effective learning frameworks empirically \cite{ChengOMB19}. For rigorous theoretical analysis, Bayesian methods are more attractive for safety-critical applications \cite{FanNTAAT20}. A principled Bayesian dynamics learning framework under CBF constraints is presented in \cite{DhimanKFA23}. However, these methods do not scale well on high dimensional systems and real-time control tasks due to high computational costs. For the latter case, NN-based control scheme is studied in \cite{TaylorSYA20} and \cite{WangMLS021} for one- and high-order CBF methods respectively. The critical limitation of these works lies in the lack of theoretical analysis. In addition, the framework in \cite{TaylorSYA20} needs episodic training and cannot work in on-line control tasks. From a Bayesian perspective, GPs are introduced to model the unknown parts of CBF constraints with data collected in an event-triggered manner to reduce the computational complexity \cite{CastanedaCZTS21, abs-2208-10733}. Similar to our method, the Bayesian linear regression (BLR) model is introduced to learn the unknown residual part of CBF constraints with point-wise guaranteed safety \cite{BrunkeZS22}. However, the basis functions used in \cite{BrunkeZS22} should be chosen from a set of class $\mathcal{K}_\infty$ functions, which is often hard to determine for accurate modeling and effective adaptation. Besides, all methods assume the unknown part is static, which is impractical in dynamic control environments.

\subsubsection{Meta Learning for Safe Control}
\label{section:relatedwork_meta}
For dynamic control scenarios where the control environment is changing, e.g. time-varying disturbances \cite{RichardsASP21} or switched external inputs \cite{MajumdarFS21}, the generalization ability of a controller is significant. In terms of CBF methods, the generalization ability can be enhanced by (i) the adaptive structure of CBFs considering dynamic environment \cite{XiaoBC22}, and (ii) the dynamic parameter in the CBF constraints \cite{XiaoBC20}. The former is achieved either automatically by machine learning approaches or manually by domain experts. The latter is related to parameter optimization, current methods including differential convex optimization techniques \cite{MaZTS22} and reinforcement learning \cite{abs-2303-04313}.
Nevertheless, by assuming perfect knowledge of environments, these methods can hardly work when uncertainty exists, which is an important issue to be addressed in this paper.

The meta-learning techniques are well-known capable of modeling uncertainties across similar tasks. By assuming shared implicit variables \cite{FinnAL17} or unified hyper-priors of learning models \cite{GrantFLDG18}, a meta model pre-trained with data of historical tasks can effectively adapt to a new task \cite{HarrisonSP18}. The meta-adaptive control strategy has been studied in \cite{RichardsASP21}. In \cite{MajumdarFS21}, the probably approximately correct (PAC) is integrated with Bayesian learning framework for provable generalization ability of the designed controllers. However, both studies does not work in safety-critical environments. Very recently, the safe meta-control algorithms are developed based on Bayesian learning and reachability analysis in \cite{LewSHBP22}, where unfortunately, estimation processes and safe control algorithms are rather complex. Differently, we are interested in estimating the scalar uncertainty in a CBF constraint, which is much simpler than estimating vector uncertainties for high-dimensional systems. Moreover, we argue that our method can adapt to changeable environments, while the work in \cite{LewSHBP22} only considers uncertain dynamics in a deterministic environment.

\subsection{Our Contributions}
In this work, we make three key contributions.
\begin{itemize}
    \item We propose a systematic development of an adaptive and probabilistic safe control framework by integrating meta learning techniques into the control barrier function (CBF) method. Our algorithm achieves computational efficiency in on-line control through the direct modeling of scalars in CBF constraints and the introduction of finite feature spaces based on adaptive BLR models.
    \item We provide a theoretical investigation into the criteria for ensuring probabilistic safety in our designed algorithm. Under mild assumptions, we establish upper bounds on the weight estimation errors and a relationship between these bounds and the predicted confidence levels, ensuring probabilistic safety during control. 
    \item We conduct empirical evaluations to assess the performance of our algorithm in various obstacle avoidance control scenarios. The results demonstrate the superior performance of our algorithm compared to robust and GP-based CBF methods, both with and without online sampling. Furthermore, our method exhibits enhanced efficiency in safe exploration through online sampling.
\end{itemize}

\subsection{Notations}
Throughout this paper, $\Real$ denotes the real value space, while $\Real^n$ denotes the $n-$dimensional real space. For a vector $x\in \Real^n$ and a differential scalar function $h(x)$, $\nabla_x h = \partial h(x)/ \partial x$. The Lie derivative of  $f(x)\in \Real^n$ w.r.t. $h(x)\in \Real$ is denoted by $\mathscr{L}_f h(x) = \nabla_x h^{\top} f(x)$. $\chi_d^2(p)$ represents the $p$-th quantile of the $\chi^2$ distribution with $d$ degree of freedom (DOF). For a matrix $A$, $\lambda_{\max}(A)$ and $\lambda_{\min}(A)$ denote the maximum and minimum eigenvalues of $A$, respectively. Let $\Vert x \Vert_A = \sqrt{x^\top A x}$ be the weighted $2$-norm of a vector $x$.

\section{Preliminaries}
\label{section:preliminaries}


\subsection{Problem Formulation}
\label{section:problem_formulation}

The autonomous control systems are considered to take the following \emph{control-affine} structure in this paper:
\begin{equation}
    \dot x = f(x) + g(x) u + \varphi_\omega(x) + \epsilon, \label{eqn:system_dynamics}
\end{equation}
where $x\in \Real^{n}$ represents the system state and $u\in \Real^{m}$ is the system input. Both drift dynamics $f: \Real^n\to\Real^n$ and input dynamics $g:\Real^n\to\Real^{n\times m}$ are known and locally Lipschitz continuous, corresponding to the prior knowledge of the systems. In addition, we introduce an unknown vector $\varphi_\omega:\Real^{n} \to \Real^{n}$. Here, $\omega$ denotes the label of a certain control environment. The unknown small noise $\epsilon$ is also considered due to ubiquitous disturbances or observation errors. This definition of our systems is also taken in other works \cite{LewSHBP22, MajumdarFS21, MaZTS22}. In terms of safety, the system state should stay in certain regions $\mathcal{X}_\omega \subset \Real^n$ at a high probability in case of potential risks. Note that $\mathcal{X}_\omega$ allows to change in different environments. The control input should be taken from a given set $\mathcal{U} \subset \Real^m$. We formalize the safe control problem as follows.

\emph{Probabilistic Safe Control Problem} (PSCP):
\begin{gather}
    \min_{u\in \mathcal{U}} ~ l(x, u) \label{eqn_ccscp} \\
    \text{subject to} \quad 
     \dot x = f(x) + g(x) u + \varphi(x, \omega) + \epsilon, \tag{\ref{eqn_ccscp}{a}} \label{eqn_ccscp_dynamics}\\
    \mathbb{P}\left( x \in \mathcal{X}_\omega \right) > 1 - \delta. \tag{\ref{eqn_ccscp}{b}} \label{eqn_ccscp_prob_safety}
\end{gather}
In the above equations, $\delta \in [0, 1)$ is a given threshold, and $l: \Real^{n\times m}\to\Real$ denotes a control objective, such as tracking a nominal input. PSCP is not uncommon in real-life applications. For example, a vehicle is expected to avoid moving on irregular areas of the road in terms of high mechanical wear and tear or other unpredictable risks, serving as a probabilistic safety constraint. While the resistance, load, road condition, are changing according to different scenarios, which are naturally uncertain a priori, leading to the unknown dynamics $\varphi$ and different admissible regions $\mathcal{X}_\omega$. We endeavor to propose a safe control method to solve PSCP. Before further investigations, we present a brief overview of CBF and adaptive BLR (ABLR) models with some necessary assumptions.

\subsection{Control Barrier Function Method}
\label{section:cbf_method}
The CBF is defined as a measure of the safety distance. Formally, a valid CBF is given as follows.
\begin{definition}[Control Barrier Function \cite{AmesXGT17}]
\label{definition:cbf}
Consider a continuously differentiable function $h_\omega(x):\Real^{n} \to \Real$, and a closed convex set defined by $\mathcal{C}_\omega = \left\{ x \vert h_\omega(x)\ge0 \right\}$. If $\mathcal{C}_\omega \subseteq \mathcal{X}_\omega$, then $h_\omega(x)$ is a valid CBF for systems \eqref{eqn:system_dynamics}.
\end{definition}

Dependent to $\omega$, a CBF may differ in changing environments. To ensure safety during the entire control procedures, the concept of invariant set is introduced to obtain a valid CBF-based safety constraint below.

\begin{definition}[Safety Constraint \cite{AmesXGT17}]
\label{definition:safety_cbf}
    For an initial state $x_0 \in \mathcal{C}_\omega$, i.e. $h_\omega(x_0) \ge 0$, if there exists a locally Lipschitz continuous controller satisfying
    \begin{equation}
    \sup_{u \in \mathcal{U}} \nabla_x h^\top_\omega(x) \dot x  \ge - \alpha (h_\omega(x)), \label{eqn:definition_CBF}
    \end{equation}
    where $\alpha$ represents an extended class $\mathcal{K}$ function, then $h_\omega(x)$ is forward invariant. As a consequence, systems \eqref{eqn:system_dynamics} are safe with probability $1$, i.e. $x \in  \mathcal{X}_\omega$ starting from $x_0$.
\end{definition}

For systems \eqref{eqn:system_dynamics}, CBF constraints \eqref{eqn:definition_CBF} are linear to $u$, making it tractable to compute a safe input. We assume the admissible control input set $\mathcal{U}$ is described by a linear matrix inequality as $u_{\min} \leq A u \leq u_{\max}$. When $A=I_m$, $\mathcal{U}$ represents the input saturation. Moreover, the loss function is designed quadratic to the input vector, such as $l = \norm{u - u_{\text{ref}}}^2$ with $u_{\text{ref}}$ the task specific nominal control input. In all, the safe controller is obtained by solving the following CBF-based quadratic programming (CBF-QP) problem at each control step:

\begin{gather}
    u^* = \min_{u\in \Real^m} ~ \norm{u - u_{\text{ref}}}^2 \label{eqn:cbfQP}\\
    \mathscr{L}_f h_\omega(x) + \mathscr{L}_g h_\omega(x) u  + \Delta_\omega(x) + \alpha \left(h_\omega(x)\right) \ge 0, \tag{\ref{eqn:cbfQP}{a}} \label{eqn:cbfQP_cbf}\\
    u_{\min} \leq A u \leq u_{\max}, \tag{\ref{eqn:cbfQP}{b}} \label{eqn:cbfQP_saturation}
\end{gather}
where $\Delta_\omega(x) = \mathscr{L}_{\varphi_{\omega}} h_\omega(x) + \epsilon_\omega$ with $\epsilon_\omega = \mathscr{L}_\epsilon h_\omega(x)$. Note that uncertainty exists in \eqref{eqn:cbfQP_cbf}. To tackle this, there are commonly two routines: (i) to find the largest bound of $\Delta_\omega(x)$ as the worst-case estimation \cite{Jankovic18}, and (ii) to estimate $\Delta_\omega(x)$ with adaptive error bounds \cite{TaylorA20,WangLWSZH23}. In this paper, the latter is chosen for better generalization. The following assumption is critical and widely made for estimation and adaptive design.

\begin{assumption} [Measurable Variables]
    \label{assumption:measurable}
    It is assumed that $\mathscr{L}_f h_\omega(x)$, $ \mathscr{L}_g h_\omega(x) u$, and $\dot h_\omega$ are all measurable.
\end{assumption}

The above statement is equal to assume $x$ is fully observable and the first derivative of $h_\omega$ can be computed or estimated. For simplicity, we denote the estimation of $\Delta_\omega(x)$ by $\tilde\Delta_\omega(x) = \dot h_\omega - \mathscr{L}_f h_\omega(x) -\mathscr{L}_g h_\omega(x) u$, and assume an \emph{unknown} parametric regressor as $ \tilde\Delta_\omega(x) = \phi^\top(x)\mathbf{w}^*$. The following assumption limits the regression error in theory. 
\begin{assumption} [Noise Bound \cite{YadkoriPS11}]
    \label{assumption:noisebound}
    For a sample sequence $\{\phi_t\}_{t=1}^{\infty}$ and a noise sequence $\{\eta_t\}_{t=1}^{\infty}$, the estimation noise $\eta_t = \Delta_\omega^t - \phi_t^\top \mathbf{w}^*$ is conditionally $\sigma_0$-sub-Gaussian, where $\sigma_0 > 0$ is a fixed constant. Thus for an integer $t>1$, there is
    \begin{equation}
        \forall \lambda \in \mathbb{R}, \quad \mathbb{E}\left[\exp^{\lambda \eta^t} \vert \phi_{1:t},  \eta_{1:t-1}\right] \leq \exp \left(\frac{\lambda^2 \sigma_0^2}{2}\right).
    \end{equation}
\end{assumption}

\begin{remark}
    Assumption \ref{assumption:noisebound} naturally assigns zero mean value and bounded variance to $\eta_t$. The zero-mean Gaussian noise $\mathcal{N}(0, \sigma_0^2)$ that is a common noise assumption in the research of observer-based control \cite{AugerHGMOK13} meets this condition.
\end{remark}

\subsection{Adaptive Bayesian Linear Regression}

To estimate the unknown part in \eqref{eqn:cbfQP_cbf}, we introduce the ABLR \cite{GPML}, a nonparametric model with tractable uncertainty quantification. A BLR models an unknown function $y(x)$ by $\tilde y(x) =   \phi^\top (x) \mathbf{w}$, where $\phi(x) = \left[\phi_1(x), \dots, \phi_D(x)\right]^\top$, and weights $\mathbf{w} \in \Real^D$. Instead of point estimation, we assume the weight priors follow a multivariate Gaussian distribution $\mathcal{N}\left( \bm{\mu}_0, \sigma_0^{2} K_0 \right)$. Given a set of data $\mathcal{D} = \left\{(x_i, y_i)\right\}_{i=1}^N$, applying the Bayes rule, the weight posteriors are computed by $p(\mathbf{w} \vert \mathcal{D}) = \mathcal{N}\left(\bm{\mu}_\tau, K_\tau\right)$ where
\begin{equation}
    \bm{\mu}_\tau = K_\tau \left(\Phi \mathbf{y} + K_0^{-1}\bm{\mu}_0 \right),  \quad K_\tau = \left(K_0^{-1}  + \Phi\Phi^\top \right)^{-1}.
    \label{eqn:posteriorBLR}
\end{equation}
In the above equation, $\Phi = \left[\phi(x_1), \dots, \phi(x_N) \right]$, and $\mathbf{y} = [y_1, \dots, y_N]^\top$. As a result, the posterior predictive distribution at a test point $x_t$ is given by
\begin{align}
    p(y \vert x_t, \mathcal{D}) = & \int p(y\vert x_t, \mathbf{w}) p(\mathbf{w}\vert \mathcal{D})\, d\mathbf{w} \nonumber\\
    = & \mathcal{N}\left(\bm{\mu}_\tau^\top \phi(x_t), \Sigma_t \right), \label{eqn:ABLR_predicted_distribution}
\end{align}
with
\begin{equation}
    \Sigma_t = \sigma_0^{2} \left(1  + \phi(x_t)^\top K_\tau \phi(x_t)\right).
    \label{eqn:predictedvarianceBLR}
\end{equation}

The BLR can adapt to different tasks if $\phi$ is able to represent the inductive biases among tasks with appropriate priors of $\mathbf{w}$. However, it is always difficult to determine a suitable combination of basis functions $\phi$ to strike a balance between prediction accuracy and computational efficiency \cite{Bishop07}. The ABLR \cite{SnoekRSKSSPPA15, Murphy12} shares the same structure as the vanilla BLR, differently, modifying the fixed $\phi(x)$ into a trainable mapping function, such as a (deep) NN \cite{PerroneJSA18} with $D$ outputs. Consequently, ABLR models require training to master insightful inductive biases as well as a good priors from data. The general structure of ABLR models is depicted in Fig. \ref{fig:overallalgorithm}.

\begin{remark}
    Compared to GPs \cite{GPML}, ABLR explicitly defines a kernel $k(x, x^\prime) = \phi^\top(x) K_0 \phi(x^\prime)$ in an finite dimensional feature space. This approximation benefits meta-Bayesian learning in two aspects: (i) the flexibility for adaptation by tuning kernel structures, (ii) the computational efficiency towards the number of collected data. Specifically, for $n$ data, the computation complexity is $\mathcal{O}(D^3n^2)$ for ABLR and $\mathcal{O}(n^3)$ for GPs \cite{Murphy12}. Since $D$ is fixed during control, ABLR is more efficient for large-scale on-line sampling.
\end{remark}

\section{Main Results}
\label{sec:mainresult}

\begin{figure*}[tb]
    \centering
    \scalebox{.35}
    {\includegraphics{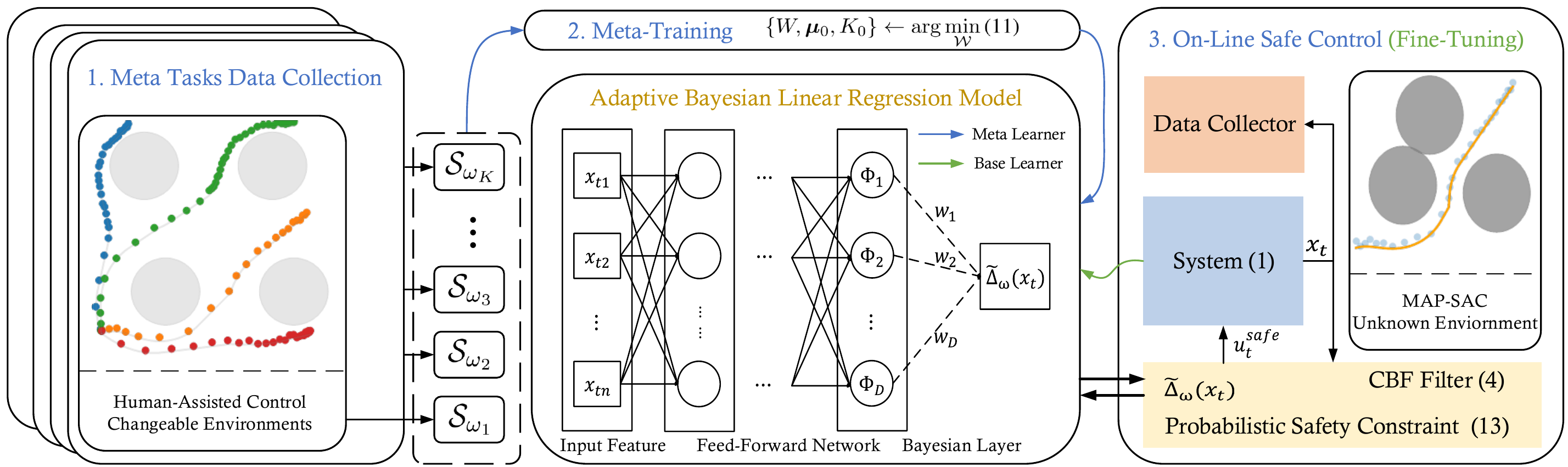}}
    \caption{Overview of the \texttt{MAP-SAC} framework.}
    \label{fig:overallalgorithm}
\end{figure*}

This section provides the systematic design and theoretical analysis of our model-assisted probabilistic safe adaptive control framework, dubbed \texttt{MAP-SAC}. We begin by proposing the overall framework in Section \ref{sec:algorithmdesign}. Then, we exploit its theoretical verification of safety in Section \ref{sec:theoreticalanalysis}. Finally in Section \ref{sec:implementation}, we present the detailed algorithms and practical implementation of \texttt{MAP-SAC}.

\subsection{Model-Assisted Probabilistic Safe Adaptive Control}
\label{sec:algorithmdesign}

The overall structure of the \texttt{MAP-SAC} is depicted in Fig. \ref{fig:overallalgorithm}. In what follows, we investigate the off-line meta-learning design of the ABLR model and the on-line probabilistic safe control framework based on CBF method.

\subsubsection{Meta-training an ABLR model}

The parameters of an ABLR model are learned under the framework of meta-learning, which contains a 'meta-learner' that extracts knowledge from many sampled trajectories of historical control tasks, and a 'base-learner' that can efficiently adapt to new tasks in an on-line manner \cite{FinnAL17}. For a Bayesian model, it is commonly to first learn the priors from historical data and then exploit them for the new tasks \cite{GrantFLDG18,MajumdarFS21}. 

The mapping functions, i.e. the feed-forward network $\phi$, and the priors of the Bayesian layer, i.e. $\mathcal{N}\left( \bm{\mu}_0, \sigma_0^{2} K_0 \right)$, are trained together. For brevity, we use a fixed network structure and put focus on learning the values of its weights, denoted by $W$. In all, the trainable parameters of an ABLR model is given by $\mathcal{W} = \{W, \bm{\mu}_0, K_0\}$. Denote the set of $K$ historical tasks by $\mathcal{T} = \left\{T_{\omega_i}(\xi_i)\right\}_{i=1}^{K}$ where $\xi_i \sim p(\xi)$. In each task, the control trajectories are sampled from environment labeled by $\omega_i$. It is assumed that uncertainties across tasks follow a fixed unknown distribution $p(\xi)$. For the $i$-th task, let the collected datasets be $\mathcal{S}_{\omega_i} = \{ (x^j, \tilde\Delta^j_{\omega_i}) \}_{j=1}^{t_i}$ where $t_i$ denotes the total number of data pairs. Note that $\tilde\Delta_{\omega_i}$ is an estimation of the true uncertainty $\Delta_{\omega_i}$. To train our model, the Kullback-Leibler divergence between the true model $p^*({\Delta}_t \vert x_t, \mathcal{S}_{\omega_i})$ and ABLR $p(\tilde{\Delta}_t \vert x_t, \mathcal{S}_{\omega_i})$ is minimized. This is equivalent to minimize the negative log likelihood \cite{Murphy12} given by
\begin{equation}
    L(\mathcal{W}) = - \mathbb{E}_{x, \tilde{\Delta}, \mathcal{S} \sim p(x, \Delta, \mathcal{S} \vert \xi)} \log p(\tilde{\Delta}_t \vert x_t, \mathcal{S}_{\omega}).
\end{equation}
Moreover, thanks to the Gaussian priors, the log likelihood can be computed analytically. Using Monte-Carlo estimation and substituting \eqref{eqn:ABLR_predicted_distribution}, we have
\begin{align}
    L(\mathcal{W}) \propto & \sum_{i=1}^{K}\sum_{j=1}^{t_i} \log \det \Sigma_i^j  + {\bm{\mu}_i^j}^\top {\Sigma_i^{j}}^{-1}  \bm{\mu}_i^j, \label{eqn:mc_loss}
\end{align}
in which $ \bm{\mu}_i^j = \tilde{\Delta}^j_{\omega_i} - \bm{\mu}_i^\top \phi(x^j)$, $\bm{\mu}_j$ and $\Sigma_i^j$ are the predicted mean and variance at point $x^j$ according to \eqref{eqn:posteriorBLR} and \eqref{eqn:predictedvarianceBLR}. We train $\mathcal{W}$ by minimizing \eqref{eqn:mc_loss} using historical tasks $T_{\omega_i}$ and trajectory data $\mathcal{S}_{\omega_i}$, $i=1,\dots,K$, in the 'meta-learner' phase. For a new task ${T}_{\omega_{K+1}}$, we collect trajectory data $\mathcal{S}_{\omega_{K+1}}$ online. As designed in \cite{HarrisonSP18}, only weight priors in Bayesian layer are adjusted for fine-tuning in the 'base-learner' phase.

\subsubsection{Fine-Tuning the meta-learned ABLR model}
The ABLR model obtained from the meta-learning stage is able to adapt to new tasks by fine-tuning with sequentially collected on-line data $\mathcal{S}^t_{\omega_{K+1}} = \{ (x^j, \tilde\Delta^j_{\omega_{K+1}}) \}_{j=1}^{t}$ in which $t$ is the number of data. In this stage, we only adjust the prior mean and covariance of $\mathbf{w}$, expecting that the forward network $\phi(x)$ has captured informative features among tasks from diverse historical data. According to \eqref{eqn:posteriorBLR}, the posteriors are given by:
\begin{align}
    \begin{split}
        \bm{\mu}_t = & K_{t-1} \left(\Phi_{t-1} \mathbf{y}_{t-1} + K_0^{-1}\bm{\mu}_0 \right),  \\
    K_{t} = & \left(K_0^{-1} + \Phi_{t-1}\Phi_{t-1}^\top \right)^{-1}, \\
    \Phi_t = & [\phi(x^1), \dots, \phi(x^{t-1}) ], \\
    \mathbf{y}_t = & [\tilde\Delta^1_{\omega_{K+1}}, \dots, \tilde\Delta^{t-1}_{\omega_{K+1}}]^\top.
    \end{split}
    \label{eqn:updateonline}
\end{align}

\subsubsection{Integrating the ABLR model in safe control framework}
For a Bayesian model, the confidence interval drawn from its predictive distribution can provide insightful information for the learning quality. Under informative priors through meta-learning, the ABLR model can make reasonable predictions of the uncertainty term $\Delta_{\omega}$ for on-line safe control. For simplification, denote the predicted Gaussian distribution of the ABLR model at a test point $x_t$ by $\mathcal{N}(\tilde\mu_t, \tilde\sigma_t^2)$ where $\tilde\mu_t$ and $\tilde\sigma_t$ is the mean and standard deviation of \eqref{eqn:ABLR_predicted_distribution}. The confidence interval related to some confidence level $\beta>0$ is computed as $\mathcal{I}_t = \left[ \tilde\mu_t - \beta\tilde\sigma_t, \mu_t + \beta\tilde\sigma_t \right]$. Therefore, we can integrate this into a modified CBF constraint below to formulate a probabilistic form of a CBF constraint.
\begin{definition}[Probabilistic Safety Constraint \cite{CastanedaCZTS21}]
    For a valid CBF defined in Definition \ref{definition:safety_cbf} and for a Bayesian model such as ABLR, the probabilistic safety constraint is given as
    \begin{equation}
        \mathscr{L}_f h_\omega(x) + \mathscr{L}_g h_\omega(x) u  + \tilde\mu(x) - \beta\tilde\sigma(x) + \alpha \left(h_\omega(x)\right) \ge 0. \label{eqn_pcbfQP_cbf}
    \end{equation}
\end{definition}

In this vein, we can design the CBF-based probabilistic safe controller by substituting \eqref{eqn_pcbfQP_cbf} with \eqref{eqn:cbfQP_cbf} in \eqref{eqn:cbfQP}. Before deployment, we provide a theoretical investigation to show on what degree \texttt{MAP-SAC} can ensure probabilistic safety.

\subsection{Theoretical Analysis and Probabilistic Safety}
\label{sec:theoreticalanalysis}

We begin the analysis by evaluate the modeling effect of ABLR models through meta-learning. The following assumptions are necessary as used in literature \cite{LewSHBP22}. Note that these assumptions correspond to the smoothness and correct model selection assumptions of GPs in function spaces \cite{GPML, SrinivasKKS10}.
\begin{assumption}[Capacity of Meta Models] \label{assumpstion:capacity}
    With Assumption \ref{assumption:noisebound}, $\forall \tilde \xi \sim p(\xi)$, there exists $\mathbf{w}^*(\tilde\xi) \in \mathbb{R}^D$ satisfying
    \begin{equation}
        \Delta_\omega(x) = \phi^\top(x)\mathbf{w}^*(\tilde\xi) + \eta(\tilde\xi), \quad \forall x\in \mathcal{X}_\omega .
        \label{eqn:optimalweight}
    \end{equation}
\end{assumption}
\begin{assumption}[Calibration of Meta Priors] \label{assumpstion:calibration}
    For $\xi \sim p(\xi)$, the error between $\mathbf{w}^*$ and prior mean $\bm{\mu}_0$ is calibrated with probability at least $1-\tilde\delta$, where $\tilde \delta = \delta/\kappa$ and $\kappa > 0$, as
    \begin{equation}
        \mathbb{P}\left( \Vert \mathbf{w}^* - \bm{\mu}_0 \Vert^2_{K_0^{-1}} < \sigma_0^2 \chi_D^2(1 - \tilde\delta) \right) \ge 1 - \tilde\delta.
    \end{equation}
\end{assumption}
\begin{remark}
    Theoretically, Assumption \ref{assumpstion:capacity} holds with the uniform approximation ability of (multi-layer) NNs \cite{HornikSW89}. It can be further relaxed to permit a bounded approximate error as stated in \cite{YadkoriPS11,LewSHBP22}. To ensure Assumption \ref{assumpstion:calibration}, we can set a large enough $\sigma_0$ for non-informative priors, leading to a conservative theoretical analysis of the meta model.
\end{remark}

We can now quantify the modeling effect of ABLR models by the following statements.

\begin{proposition}[Confidence Ellipsoid \cite{LewSHBP22,YadkoriPS11}] 
    \label{proposition:confidenceellipsoid}
    Let Assumptions \ref{assumption:measurable}, \ref{assumption:noisebound}, \ref{assumpstion:capacity} and \ref{assumpstion:calibration} hold. The meta ABLR model is trained by \eqref{eqn:mc_loss} with historical data and updates according to \eqref{eqn:updateonline} with on-line collected data $\mathcal{S}^t_{\omega_{K+1}}$. Let also $\mathbf{w}^*$ be the optimal weight as \eqref{eqn:optimalweight}. Then, $\mathbf{w}^*$ lies in the following set with probability at least $1-\tilde{\delta}$, as
    \begin{equation}
        \mathcal{C}^\delta_t = \left\{ \mathbf{w} \in \mathbb{R}^D \vert \Vert \mathbf{w} - \bm{\mu}_t \Vert_{K_t^{-1}} \leq \sigma_0 \Gamma^{\tilde \delta} \right\},
        \label{eqn:weightupperbound}
    \end{equation}
    where 
    \begin{align}
        \begin{split}
        \Gamma_t^{\tilde \delta} = &\sqrt{2\log \left(\frac{\det (K_t)^{-\frac{1}{2}} \det(K_0)^{\frac{1}{2}}}{\tilde\delta }\right)} \\
        & + \sqrt{\frac{\lambda_{\max}(K_t)}{\lambda_{\min}(K_0)} \chi^2_D (1 - \tilde\delta)}.
        \end{split}
    \end{align}
\end{proposition}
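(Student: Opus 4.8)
The plan is to bound the weighted error $\Vert\bm{\mu}_t-\mathbf{w}^*\Vert_{K_t^{-1}}$ by splitting it into a noise-driven part and a prior-bias part, matching the two summands of $\Gamma_t^{\tilde\delta}$. First I would obtain an exact algebraic decomposition of the estimation error. By Assumption \ref{assumpstion:capacity} each measured residual satisfies $\tilde\Delta^s=\phi(x^s)^\top\mathbf{w}^*+\eta_s$, so stacking these gives $\mathbf{y}_t=\Phi_t^\top\mathbf{w}^*+\bm\eta$ with $\bm\eta$ the noise vector. Substituting this into the posterior mean \eqref{eqn:posteriorBLR} and using $K_t^{-1}=K_0^{-1}+\Phi_t\Phi_t^\top$ to substitute for $\Phi_t\Phi_t^\top$, a short computation gives
\[ \bm{\mu}_t-\mathbf{w}^*=K_t\Phi_t\bm\eta+K_tK_0^{-1}(\bm{\mu}_0-\mathbf{w}^*). \]
A triangle inequality in the $K_t^{-1}$-norm then cleanly separates the randomness from the observation noise (first term) from the randomness in the draw $\xi\sim p(\xi)$ that fixes $\mathbf{w}^*$ (second term).

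For the noise term I would exploit the identity $\Vert K_t\Phi_t\bm\eta\Vert_{K_t^{-1}}^2=(\Phi_t\bm\eta)^\top K_t(\Phi_t\bm\eta)=\Vert\Phi_t\bm\eta\Vert_{K_t}^2$, which is precisely the self-normalized quantity governed by the martingale concentration inequality of \cite{YadkoriPS11}. Assumption \ref{assumption:noisebound} makes the partial sums $\sum_s\phi(x^s)\eta_s$ a $\sigma_0$-sub-Gaussian martingale whose regressors are predictable (Assumption \ref{assumption:measurable} ensures $\phi(x^s)$ is available at step $s$), so with $V=K_0^{-1}$ and $\bar V_t=K_t^{-1}$ that inequality yields, with probability at least $1-\tilde\delta$, the bound $\Vert\Phi_t\bm\eta\Vert_{K_t}\le\sigma_0\sqrt{2\log(\det(K_t)^{-1/2}\det(K_0)^{1/2}/\tilde\delta)}$, reproducing the first summand of $\Gamma_t^{\tilde\delta}$. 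This concentration step is the analytic core and the part I expect to be hardest to prove from scratch, since it rests on a supermartingale / method-of-mixtures argument; here it can be invoked directly as the cited lemma once the filtration and predictability structure has been verified.

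The prior-bias term is pure linear algebra. Writing $\mathbf{v}=\bm{\mu}_0-\mathbf{w}^*$, I would expand $\Vert K_tK_0^{-1}\mathbf{v}\Vert_{K_t^{-1}}^2=\mathbf{v}^\top K_0^{-1}K_tK_0^{-1}\mathbf{v}$ and chain the two congruence estimates $K_0^{-1}K_tK_0^{-1}\preceq\lambda_{\max}(K_t)K_0^{-2}$ and $K_0^{-2}\preceq\lambda_{\min}(K_0)^{-1}K_0^{-1}$ to get $\Vert K_tK_0^{-1}\mathbf{v}\Vert_{K_t^{-1}}^2\le(\lambda_{\max}(K_t)/\lambda_{\min}(K_0))\Vert\mathbf{v}\Vert_{K_0^{-1}}^2$. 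Assumption \ref{assumpstion:calibration} then bounds $\Vert\mathbf{v}\Vert_{K_0^{-1}}^2<\sigma_0^2\chi_D^2(1-\tilde\delta)$ with probability at least $1-\tilde\delta$, which is exactly the second summand of $\Gamma_t^{\tilde\delta}$. Intersecting the two high-probability events and factoring out $\sigma_0$ gives $\Vert\bm{\mu}_t-\mathbf{w}^*\Vert_{K_t^{-1}}\le\sigma_0\Gamma_t^{\tilde\delta}$, i.e. $\mathbf{w}^*\in\mathcal{C}^\delta_t$. The one bookkeeping point I would watch is the confidence budget: intersecting two $(1-\tilde\delta)$ events gives $1-2\tilde\delta$ by a union bound, so I would either allocate $\tilde\delta/2$ to each tail or absorb the factor into the free constant $\kappa$ in $\tilde\delta=\delta/\kappa$.
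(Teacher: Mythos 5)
Your proposal is correct and follows essentially the same route as the paper's proof: the identical error decomposition $\bm{\mu}_t-\mathbf{w}^*=K_t\Phi_t\bm\eta+K_tK_0^{-1}(\bm{\mu}_0-\mathbf{w}^*)$, the same self-normalized martingale bound of \cite{YadkoriPS11} for the noise term, and the same $\lambda_{\max}(K_t)/\lambda_{\min}(K_0)$ eigenvalue scaling combined with Assumption \ref{assumpstion:calibration} for the prior-bias term, differing only in the mechanical step that the paper reaches the final bound via the dual-vector choice $p=K_{t-1}^{-1}\mathbf{\tilde w}_t$ while you apply the triangle inequality directly in the $K_t^{-1}$-norm. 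Your closing bookkeeping point is in fact handled more carefully than in the paper, whose proof acknowledges mid-argument that the intersection only yields probability $1-2\tilde\delta$ although the proposition states $1-\tilde\delta$; your suggestion to allocate $\tilde\delta/2$ per tail or absorb the factor of $2$ into the free constant $\kappa$ cleanly repairs this.
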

\begin{proof}
    The derivation is presented in Appendix \ref{appendix:proof}.
\end{proof}

The above statements quantifies the estimation error bounds of parameters in the ABLR model based on the meta-learned priors ($K_0$), on-line collected data ($K_t$), and user-specific probability ($\tilde\delta$). In the context of CBF method, we can combine the worst-case adaptive CBF method to obtain a valid CBF constraint that ensure safety with a specified probability. Based upon \eqref{eqn_pcbfQP_cbf}, it is better to determine the value of $\beta$ to correctly obtain the safety probability, other than believing the uncalibrated confidence intevels predicted by ABLR models.

\begin{theorem}[Probabilistic Safety]
\label{theorem:probabilitysafety}
    Let Assumptions \ref{assumption:measurable}, \ref{assumption:noisebound}, \ref{assumpstion:capacity} and \ref{assumpstion:calibration} hold. The meta ABLR model is trained by \eqref{eqn:mc_loss} with historical data and updates according to \eqref{eqn:updateonline} with on-line collected data $\mathcal{S}^t_{\omega_{K+1}}$. Then, the input $u$ satisfying probabilistic safety constraint \eqref{eqn_pcbfQP_cbf} renders systems \eqref{eqn:system_dynamics} safe with at least probability $1-\tilde\delta$ if we set $\beta \ge \Gamma_t^{\tilde \delta}$.
\end{theorem}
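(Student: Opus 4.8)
The plan is to reduce the probabilistic statement to the single deterministic event furnished by Proposition \ref{proposition:confidenceellipsoid}, and then to show that on that event the controller defined through \eqref{eqn_pcbfQP_cbf} actually satisfies the \emph{exact} CBF safety constraint \eqref{eqn:cbfQP_cbf}. Let $E$ denote the confidence-ellipsoid event $\{\mathbf{w}^* \in \mathcal{C}^\delta_t\}$, i.e. $\norm{\mathbf{w}^* - \bm{\mu}_t}_{K_t^{-1}} \le \sigma_0 \Gamma_t^{\tilde\delta}$, which by Proposition \ref{proposition:confidenceellipsoid} has probability at least $1-\tilde\delta$. If I can establish that safety is guaranteed whenever $E$ occurs, then $\mathbb{P}(\text{safe}) \ge \mathbb{P}(E) \ge 1-\tilde\delta$ follows at once, matching the stated bound with no further union-bound bookkeeping at the theorem level.

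Working on $E$, the core estimate transfers the ellipsoidal bound on the weight error into a bound on the predicted value of the uncertainty term. Recalling $\tilde\mu(x) = \bm{\mu}_t^\top \phi(x)$ and $\Delta_\omega(x) = \phi^\top(x)\mathbf{w}^*$ (Assumption \ref{assumpstion:capacity}), a weighted Cauchy--Schwarz inequality obtained from the splitting $\phi^\top(\bm{\mu}_t-\mathbf{w}^*) = (K_t^{1/2}\phi)^\top (K_t^{-1/2}(\bm{\mu}_t-\mathbf{w}^*))$ yields
\[
    \abs{\tilde\mu(x) - \Delta_\omega(x)} = \abs{\phi^\top(x)(\bm{\mu}_t - \mathbf{w}^*)} \le \norm{\phi(x)}_{K_t}\,\norm{\mathbf{w}^* - \bm{\mu}_t}_{K_t^{-1}} \le \sigma_0 \Gamma_t^{\tilde\delta}\,\norm{\phi(x)}_{K_t}.
\]
I would then absorb the factor $\sigma_0\norm{\phi(x)}_{K_t}$ into the predicted standard deviation: from \eqref{eqn:predictedvarianceBLR}, $\tilde\sigma(x) = \sigma_0\sqrt{1 + \phi^\top(x) K_t \phi(x)} \ge \sigma_0\norm{\phi(x)}_{K_t}$, so that with $\beta \ge \Gamma_t^{\tilde\delta}$ the displayed bound gives $\Delta_\omega(x) \ge \tilde\mu(x) - \beta\tilde\sigma(x)$ on $E$.

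It then remains to substitute this lower bound into the true constraint. Since the computed $u$ satisfies \eqref{eqn_pcbfQP_cbf}, on $E$ I obtain
\[
    \mathscr{L}_f h_\omega(x) + \mathscr{L}_g h_\omega(x) u + \Delta_\omega(x) + \alpha(h_\omega(x)) \ge \mathscr{L}_f h_\omega(x) + \mathscr{L}_g h_\omega(x) u + \tilde\mu(x) - \beta\tilde\sigma(x) + \alpha(h_\omega(x)) \ge 0,
\]
which is exactly the deterministic CBF inequality \eqref{eqn:cbfQP_cbf}. Invoking Definition \ref{definition:safety_cbf}, $h_\omega$ is then forward invariant along the closed-loop trajectory, so the state remains in $\mathcal{C}_\omega \subseteq \mathcal{X}_\omega$; hence safety holds on $E$, and the probability statement follows.

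The hard part, and the step I would scrutinize most, is the identification $\Delta_\omega(x) = \phi^\top(x)\mathbf{w}^*$ used above: the genuine residual in \eqref{eqn:cbfQP_cbf} is $\mathscr{L}_{\varphi_\omega}h_\omega + \epsilon_\omega$, which carries the additive sub-Gaussian noise $\eta$ of Assumption \ref{assumption:noisebound}, so strictly the pessimistic bound must dominate $\phi^\top\mathbf{w}^* + \eta$ rather than the noiseless regressor alone. This is precisely where the additive ``$+1$'' inside $\tilde\sigma(x)$ earns its place, since $\tilde\sigma(x) \ge \sigma_0$ supplies the extra margin needed to cover the noise realization; I would need to confirm that the concentration guarantee of Proposition \ref{proposition:confidenceellipsoid} is phrased so that this single $1-\tilde\delta$ event simultaneously controls the weight error and the evaluation-point noise, rather than demanding a second union-bound term that would degrade the guarantee to $1-2\tilde\delta$.
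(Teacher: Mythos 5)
Your argument is correct and, in its core reduction, both tighter and more direct than the paper's own proof, though both rest on the same key ingredient (the confidence ellipsoid of Proposition \ref{proposition:confidenceellipsoid}, taken as a single anytime event via the self-normalized bound of Lemma \ref{lemma:selfnormalizedbound}). You transfer the ellipsoid to the prediction by the weighted Cauchy--Schwarz inequality $\abs{\phi^\top(x)(\bm{\mu}_t-\mathbf{w}^*)}\le \norm{\phi(x)}_{K_t}\norm{\bm{\mu}_t-\mathbf{w}^*}_{K_t^{-1}}$ combined with $\tilde\sigma(x)\ge \sigma_0\norm{\phi(x)}_{K_t}$, which shows directly that any $u$ satisfying \eqref{eqn_pcbfQP_cbf} with $\beta\ge\Gamma_t^{\tilde\delta}$ satisfies the exact constraint \eqref{eqn:cbfQP_cbf} on the event. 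The paper instead detours through eigenvalue bounds: it upper-bounds $\tilde\sigma(\hat x)\le \sigma_0\left(1+\sqrt{\lambda_{\max}(K_t)}\norm{\phi(\hat x)}\right)$ to form the strengthened surrogate constraint \eqref{eqn:theoremuppersigmaCBF}, uses $\norm{\mathbf{w}^*-\bm{\mu}_t}\le\sqrt{\lambda_{\max}(K_t)}\,\norm{\mathbf{w}^*-\bm{\mu}_t}_{K_t^{-1}}$, and compares \eqref{eqn:theoremuppersigmaCBF} against a worst-case adaptive CBF constraint in the style of \cite{TaylorA20,Jankovic18}. That detour buys an explicit link to the robust/adaptive-CBF literature, but it is looser (it replaces $\norm{\phi}_{K_t}$ by $\sqrt{\lambda_{\max}(K_t)}\norm{\phi}$), and strictly speaking its implication runs from the conservative surrogate \eqref{eqn:theoremuppersigmaCBF} to safety, while satisfying \eqref{eqn_pcbfQP_cbf} alone does not imply \eqref{eqn:theoremuppersigmaCBF}; your direct route proves the theorem exactly as stated, where the paper's chain certifies only the more conservative constraint.

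On the issue you flag at the end---that the genuine residual is $\phi^\top(x)\mathbf{w}^*+\eta$ rather than the noiseless regressor---your suspicion is well placed, but it identifies a gap in the paper rather than a deficiency of your proof relative to it. The paper's final substitution likewise dominates only the modeled component $\phi^\top(x)\mathbf{w}^*$ and spends no separate budget on the realized noise at the evaluation point: in its accounting, $\eta$ enters only through the self-normalized martingale bound that produces the ellipsoid, and indeed the proof of Proposition \ref{proposition:confidenceellipsoid} internally records an event of probability $1-2\tilde\delta$ while the statement reads $1-\tilde\delta$ (the slack presumably meant to be absorbed into $\tilde\delta=\delta/\kappa$). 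So your proof matches the paper's guarantee if safety is interpreted as protection against the noise-free barrier dynamics; to additionally cover the realized sub-Gaussian $\eta$, the extra margin $\beta\sigma_0$ coming from $\tilde\sigma(x)\ge\sigma_0$ gives a per-step tail bound of order $\exp(-\beta^2/2)$, which would have to be union-bounded over time and added to the failure probability---precisely the degradation beyond $1-\tilde\delta$ that you anticipated, and one the paper's proof does not resolve either.
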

\begin{proof}
    The derivation is presented in Appendix \ref{appendix:proof}.
\end{proof}

\subsection{Algorithm Implementation}
\label{sec:implementation}

\begin{algorithm}[t!]
\SetKwFunction{Gradient}{Gradient}

\KwData{$\mathcal{S}_{\omega_i}$, $i\leftarrow 1$ \KwTo $K$, sampled from $K$ similar tasks}
\KwIn{Configurations of ABLR, training epoch $N$}
\KwOut{A pre-trained ABLR model for similar tasks}
Initialize an ABLR model with $\mathcal{W} = \{W, \bm{\mu}_0, K_0\}$\;
\For{$n\leftarrow 1$ \KwTo $N$}
{   
     \tcc{Calculate the meta loss \eqref{eqn:mc_loss}}
     $L(\mathcal{W}) \gets 0$\;
    \For{$i = 1, \dots K$}
    {
        $L(\mathcal{W}) += \sum_{j=1}^{t_i} \log \det \Sigma_i^j  + {\bm{\mu}_i^j}^\top {\Sigma_i^{j}}^{-1}  \bm{\mu}_i^j$\;
    }
    Update $\mathcal{W}$ by \Gradient{$L(\mathcal{W})$}\;
}
Save $\mathcal{W}$ and the ABLR configurations.
\caption{Meta learning stage of \texttt{MAP-SAC}}
\label{alg:map-sac-metatraining}
\end{algorithm}

\begin{algorithm}[t!]
\SetKwFunction{Gradient}{Gradient}

\KwData{A small $\mathcal{S}_{\omega_{K+1}}$ from current task}
\KwIn{Nominal dynamics and controllers; constraint number $N_c$; fine-tuning steps $N_f$, control steps $N_s$, sampling $F_s$, CBF-QP configurations}
\KwOut{Control trajectories with probabilistic safety}
\tcc{Warm start the learning models}
\For{$i\leftarrow 1$ \KwTo $N_c$}
{   
    Construct an ABLR model for the $i$-th constraint\;
    Load $\mathcal{W}$ and the $i$-th piece of $\mathcal{S}_{\omega_{K+1}}$ by \eqref{eqn:posteriorBLR}\;
    
    \For(\tcp*[f]{Fine-tuning}){$n\leftarrow 1$ \KwTo $N_f$}{
    $L(\bm{\mu}_0) \gets \sum_{j=1}^{t_{K+1}} {\bm{\mu}_{K+1}^j}^\top {\Sigma_{K+1}^{j}}^{-1}  \bm{\mu}_{K+1}^j$\;
    Update $\bm{\mu}_0$ by \Gradient{ $L(\bm{\mu}_0)$}\;
    }
}

\tcc{On-line safe control}
\For{$i\leftarrow 1$ \KwTo $N_s$}
{   
    \If(\tcp*[f]{On-line Sampling}){$i \%  F_s \leq 0$}{
        Sample state and observations, update $\mathcal{S}_{\omega_{K+1}}$\;
        Update posteriors according to \eqref{eqn:updateonline}\;
        Fine tune $\bm{\mu}_0$ by \Gradient{$L(\bm{\mu}_0)$}\;
    }
    Predict posterior means and variances by \eqref{eqn:ABLR_predicted_distribution}\;
    Calculate the safe control input by \eqref{eqn_pcbfQP_cbf}\;
    Input the safe controller to systems \eqref{eqn:system_dynamics}\;
}

\caption{On-line control stage of \texttt{MAP-SAC}}
\label{alg:mapsac-onlinecontrol}
\end{algorithm}

In this section, we discuss some implementation details of \texttt{MAP-SAC}. The pseudo code for two stages of \texttt{MAP-SAC} is presented in Algorithm \ref{alg:map-sac-metatraining} and Algorithm \ref{alg:mapsac-onlinecontrol}. 
Our discussions consist of the following aspects.

\subsubsection{Meta-task data collection} Although similar tasks are accessible in many control scenarios \cite{HarrisonSP18,BrunkeGHYZPS22,MajumdarFS21, RichardsASP21}, from which sampling is intractable considering safety issues. Therefore, the human-assisted control (HAC) is needed, to either inspect control procedures \cite{SaundersSSE18}, or pre-plan safe trajectories with tracking controllers \cite{ChengOMB19}. In addition, we can use high-fidelity simulations to avoid catastrophic unsafe behaviors \cite{PengAZA18}. For all examples considered in this work, we pre-plan safe trajectories for all meta control tasks.

\subsubsection{ABLR configurations and Meta training} The structure of forward NNs and cell number $D$ of the Bayesian layer determines the capacity of an ABLR model. To meet Assumption \ref{assumpstion:capacity}, we use multi-layer perceptron (MLP) in our experiments. We fix the prior covariance matrix $\sigma_0^2K_0$ during training in order to meet Assumption \ref{assumpstion:calibration} and reduce computation complexity, referring to \cite{HarrisonSP18} otherwise. In a more practical perspective, we can added regularization terms during training in case of over-fitting for better performance \cite{LewSHBP22}.

\subsubsection{Warm start the learning models} Since meta data are also expensive to collected, it is likely to have biased ABLR model after meta-training. We refine the ABLR model using data from the current task. In this stage, we only update the prior mean vector $\bm{\mu}_0$ \cite{SnoekRSKSSPPA15} by minimizing the simplified loss 
\begin{align}
    L(\bm{\mu}_0) \propto \sum_{j=1}^{t_{K+1}} {\bm{\mu}_{K+1}^j}^\top {\Sigma_{K+1}^{j}}^{-1}  \bm{\mu}_{K+1}^j,
    \label{eqn:simplifiedloss_refine}
\end{align}
which is similar to optimize the hyperparameters of a GP model while keeping kernel structure fixed. We can further speed up the optimization processes by manually computing the analytical format of gradients \cite{GPML}. Note that at least one sample of the current task is needed to warm start.

\subsubsection{Online sampling and control} Despite theoretically attractive, the lower bound of $\beta$ in Theorem \ref{theorem:probabilitysafety} is too conservative in practice. Typically, we can set $\beta=1.96$ for a $95\%$ confidence level in \eqref{eqn_pcbfQP_cbf}. In addition, we can refine the ABLR models using \eqref{eqn:simplifiedloss_refine} and safe-exploration techniques \cite{KollerBT018,WachiSYO18,LiuSCAY20} for better control performance. Furthermore, a better deployment for more efficient real-time computation can be a promising improvement of our algorithm \cite{WangCGYWH21}.

\section{Algorithm Validation}
\label{sec:experiemnts}
\subsection{Experiment Setup}
We test our framework on the obstacle avoidance control of a moving point, which is a popular verification application of safe control methods \cite{WangLWSZH23, BlackAP21, LewSHBP22}. We consider multiple scenarios, (i) uncertain dynamics with one fixed obstacle, (ii) uncertain dynamics with one uncertain obstacle, and (iii) uncertain dynamics with multiple uncertain obstacles. Our algorithm is compared with three methods: the optimal CBF method (\texttt{CBF-OPT}) with perfect knowledge of uncertainty \cite{AmesXGT17} (no uncertainty), the robust CBF method (\texttt{CBF-RUST}) with the worst-case estimation of uncertainty \cite{Jankovic18} (no learning), and the GP-based probabilistic CBF method (\texttt{GP2-SAC}) that model uncertainty by sampling data from the current task \cite{CastanedaCZTS21} (no meta). We expect \texttt{MAP-SAC} performs nearly the same as \texttt{CBF-OPT}, and significantly outperforms \texttt{GP2-SAC} with non-informative priors and \texttt{CBF-RUST}.

We implement GPs using GPy \cite{gpy2014} with Mat\'ern ${5/2}$ kernel, constant mean function, and fixed-noise Gaussian likelihood. We implement the ABLR model by JAX \cite{jax2018github}, with a $3$-layer multi-output NN in Haiku \cite{jax2018github}. Specifically, he first two layers of the NN are fully connected by $256$ $\tanh$ cells, while the last layer contains $D$ output sigmoid cells. For both models, we fixed the variance of noise distribution by $0.1$. The CBF-QP is based on the solver for convex optimization in \cite{DiamondB16}.

The dynamics of a moving point system is given by:
\begin{equation}
    \dot x = \left[ \begin{array}{cc}
         1 & 0 \\
         0 & 1 
    \end{array} \right] u - \underbrace {w * \left[ \begin{array}{c}
         \cos(x_1) + \sin(x_2) \\
         \cos(x_1) + \sin(x_2)
    \end{array} \right]}_{\varphi_\omega(x)},
    \label{eqn:movingpoint}
\end{equation}
where $\varphi_\omega(x)$ is unknown and $\omega$ takes values randomly from a known distribution $p(\xi)$. We consider a uniform distribution of $\omega$ as $\omega \sim {U}(0.5, 2.5)$. For simplicity, the obstacles are described by circles. The $i$-th obstacle is centered at $x_\omega^i, y_\omega^i$ with radius $r_\omega^i$. Therefore, a valid CBF for $i$-th obstacle is 
\begin{equation}
    h^i_\omega (x) = (x_1 - x_\omega^i)^2 + (x_2 - y_\omega^i)^2 - {r^i_\omega}^2.
\end{equation}
In the current task, the moving point starts from the origin and moves towards a target point $x_T = [3.0, 4.0]^\top$ with a nominal controller $u_{\text{ref}} = -k_f (x - x_T)$ with a fixed parameter $\omega = 1.5$. There are already $20$ noisy samples from a known safe trajectory, as the gray line in Fig. \ref{fig:illustrationexample}. The GP and ABLR models are optimized/fine-tuned according to these samples. For all scenarios, we use the following configurations: $k_p=10.0$, $\alpha=1.0$, $u_{\max}= 5$, $u_{\min} = -5.0$, and $\beta=1.96$. 

\subsection{Experiments Results}
\subsubsection{Uncertainty in Dynamics} We consider a fixed obstacle described by $x_r=y_r = 1.5$, and $r = 0.8$. We randomly sample $20$ different $\omega$ as historical tasks, and generate $30$ sparse samples in each task. We set $D=10$. The ABLR model is meta-trained according to these samples. The algorithms are evaluated on two simulations without/with on-line samplings respectively. To evaluate performance, we introduce the criteria from the field of adaptive CBF methods \cite{TaylorA20,BlackAP21,WangLWSZH23}. Therein, a better algorithm commonly admits a closer distance between system state and the obstacle, namely a smaller $h(x)$ during control. Figure \ref{fig:withoutonlinesampling} shows the different trajectories, when the on-line sampling is forbidden, i.e., both GP and ABLR models do not update during control. It is shown that \texttt{MAP-SAC} performs the best, while other two algorithms are more conservative. When online sampling is permitted, we set the sampling frequency by $10\mathrm{Hz}$. As shown in Fig. \ref{fig:withonlinesampling}, the performance of \texttt{GP2-SAC} improves a lot, while \texttt{MAP-SAC} is almost the same as the optimal controller \texttt{CBF-OPT}.

\begin{figure}[tb]
    \centering
    \scalebox{.8}
    {\includegraphics{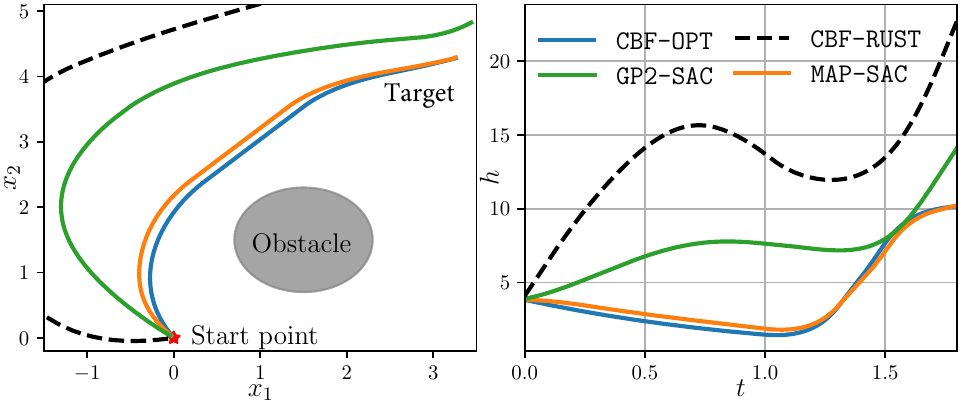}}
    \caption{Comparative results of the moving point experiment without on-line sampling. Both models do not update during control.}
    \label{fig:withoutonlinesampling}
\end{figure}

\begin{figure}[tb]
    \centering
    \scalebox{.8}{\includegraphics{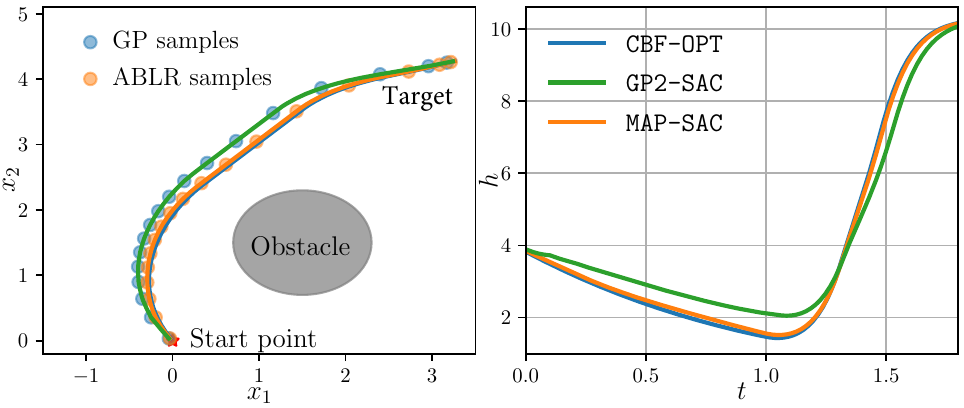}}
    \caption{Comparative results of the moving point experiment with on-line sampling. The GP model is re-optimized as long as a new sample is obtained.}
    \label{fig:withonlinesampling}
\end{figure}

\subsubsection{Uncertainty in Both Dynamics and an Obstacle} In this scenario, the obstacle in each meta task is generated randomly from $x_r, y_r\sim U(1.0, 4.0)$, $r\sim U(0.2, 1.0)$. Since uncertainty increases, we set $D=20$ and generate $50$ meta-tasks, each with $30$ sparse samples, to train the ABLR model. In the current task, the new obstacle locates on $x_r=1.0$, $y_r = 2.5$, and $r = 1.2$ (out-of-distribution). The simulation results are presented in Fig. \ref{fig:withoutonlinesampling_e2}, which are consistent with the first scenario.

\begin{figure}[tb]
    \centering
    \scalebox{.8}
    {\includegraphics{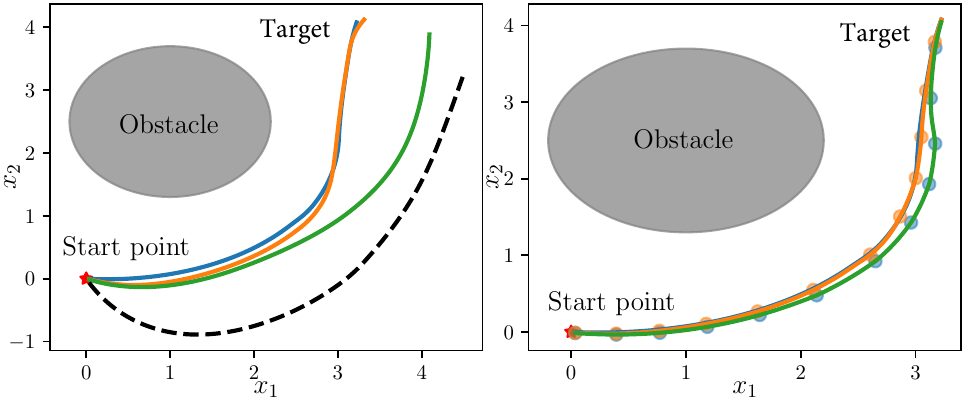}}
    \caption{Comparative results of the moving point experiment without/with on-line sampling. The uncertainties exist in both dynamics and obstacles.}
    \label{fig:withoutonlinesampling_e2}
\end{figure}

\subsubsection{Uncertainty in Multiple Obstacles} Note that the saved model configurations from the second scenario can directly used in the final experiment. We consider three obstacles located on $x_r=[1.0, 3.0, 2.5]$, $y_r = [2.5, 2.0, 0.5]$, and $r = [1.0, 0.5, 0.8]$. We build three ABLR models from the saved configurations, and refine them according to different pieces of data (observations are organized in pieces). We introduce three CBF constraints into CBF-QP for safe control. Without on-line sampling, as shown in Fig. \ref{fig:multiple_comparison}, both \texttt{GP2-SAC} and \texttt{CBF-RUST} fail to find a path to the target position due to conservative estimation of uncertainties, while \texttt{MAP-SAC} performs significantly better. With on-line samples, seen from Fig. \ref{fig:multiple_online}, \texttt{GP2-SAC} starts to explore a valid path, however, in a very inefficient manner. While \texttt{MAP-SAC} performs approximately the same as \texttt{CBF-OPT} but inefficient to some extend.

\begin{figure}[tb]
    \centering
    \scalebox{.8}{\includegraphics{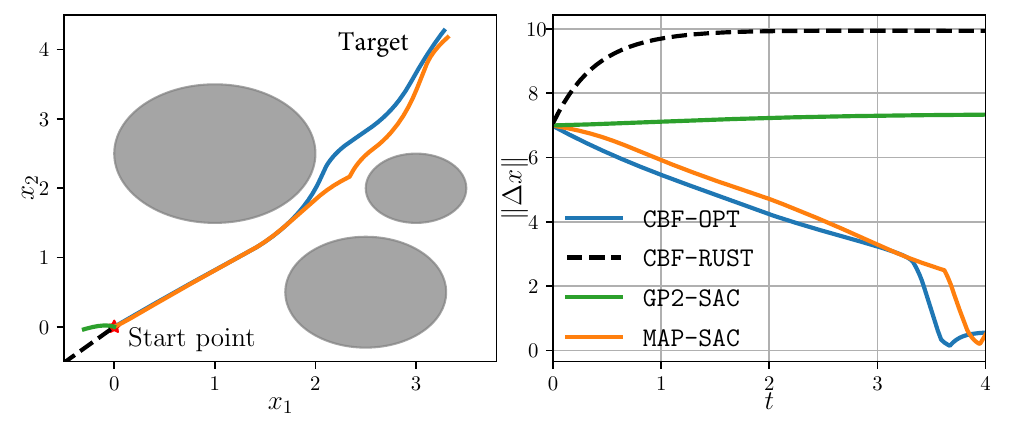}}
    \caption{Comparative results of the moving point experiment without on-line sampling. The uncertainties exist in dynamics and multiple obstacles.}
    \label{fig:multiple_comparison}
\end{figure}

\begin{figure}[tb]
    \centering
    \scalebox{.8}{\includegraphics{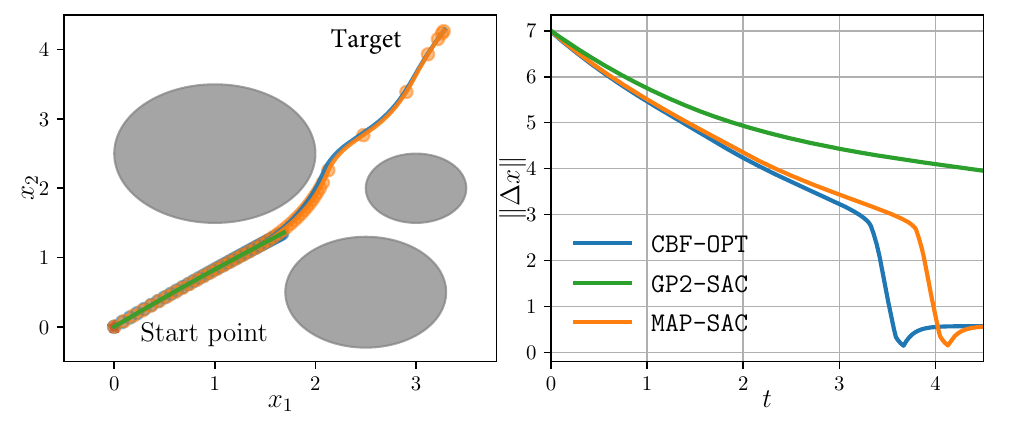}}
    \caption{Comparative results of the moving point experiment with on-line sampling. The uncertainties exist in dynamics and multiple obstacles.}
    \label{fig:multiple_online}
\end{figure}

\begin{remark}
    In Fig. \ref{fig:multiple_comparison} and \ref{fig:multiple_online}, $\Delta x = x - x_T$. We evaluate the efficiency of different algorithms by the error bound of $\Delta x$. In two simulations, \texttt{MAP-SAC} without on-line sampling is more efficient than it with new samples. It is mainly because of the rough, biased estimation of the uncertainty, with more risks of performing unsafe behaviors. The different efficiency between \texttt{MAP-SAC} and \texttt{GP2-SAC} is also revealed in Fig. \ref{fig:illustrationexample} As the GP model predicts a more conservative reachable set than meta-trained ABLR model at each step, \texttt{MAP-SAC} can be a more suitable choice for on-line safe exploration \cite{KollerBT018,WachiSYO18,LiuSCAY20}.
\end{remark}


\section{Conclusions}

In this work, we develop a novel framework for safe control against uncertainties by leveraging Bayesian learning models with meta learning techniques. The theoretical analysis establishes the probabilistic safety guarantee of our method, while also providing insights into the implementation details of the algorithm. We conduct comparative experiments on several control scenarios, demonstrating the effectiveness and superiority of our framework. Beyond these results, we observe that our method can be more efficient for safe on-line exploration, especially with multiple constraints. 

There are several limitations of our method which we will investigate in future research. For model-based learning, our algorithm only considers uncertainty in drift dynamics, which is not coupled with control input. However, uncertain input dynamics is also common in control systems. In terms of theoretic analysis, our criteria does not explain the quantitative impact of meta learning for probabilistic safety. Finally, although attractive in studied experiments, our algorithm needs more test on other applications to demonstrate its performance.

\section{Appendix}
\subsection{Detailed Illustration}
\label{appendix:illustration}
In the illustrative example, the moving point system is determined by \eqref{eqn:movingpoint} with $\omega=1.5$. The known safe trajectory, as the light gray line in Fig. \ref{fig:illustrationexample}, follows the a conic as $1.2 (x_1 - 1.5)^2 = x_2 - 0.3$. The samples of $x_1$ follow a uniform distribution $U(0.3, 1.5)$ independently. These samples are also used in experiments for warm start in Section \ref{sec:experiemnts}.

The safety boundaries plotted in Fig. \ref{fig:illustrationexample} are mathematically the zero-value contour of different worst-case robust CBF $h_a(x) = 0$ of a nominal system $\dot x = u + \varphi(x)$. Specifically, we follow the definition of the adaptive CBF in \cite{TaylorA20} as $h_a(x) = h(x) - \Delta(x)$, where $\Delta(x)$ is the worst-case estimation of $ \varphi(x)$. The red line represents the accurate estimation, while the green and orange lines are given by pessimistic estimation of different Bayesian models with $\beta = 1.96$. The shaded parts represent unreachable sets, different from the infeasible/unsafe set given by a classification model. In this vein, the boundaries imply different conservativeness of CBF methods.

\subsection{Proofs of Main Results}
\label{appendix:proof}
Proposition \ref{proposition:confidenceellipsoid} is derived by leveraging the ideas of vector learning in \cite{LewSHBP22} and linear stochastic bandits in \cite{YadkoriPS11}. The following additional definitions are needed for derivation.

We view the on-line sampling during the control process as generating sequence $\left\{\phi_t \right\}_{t=1}^\infty$ where $\phi_t = \phi(x_t)$ is the forward network in ABLR model. Consider the $\sigma$-algebra $\mathcal{F}_t = \sigma\left( \phi_1,\dots,\phi_{t+1}, \eta_1, \dots, \eta_{t} \right)$. Let us assume that $\left\{\mathcal{F}_t \right\}_{t=1}^\infty$ is any filtration such that for integer $t\ge 1$, $\phi_t$ is $\mathcal{F}_{t-1}$-measurable, $\eta_t$ is $\mathcal{F}_t$-measurable, and $\Delta_{\omega_{K+1}}(x_t) = \phi_t^\top \omega^* + \eta_t$ is also $\mathcal{F}_t$-measurable. Define a sequence $\left\{ S_t\right\}_{t=0}^\infty$ with $S_t = \sum_{s=1}^t \eta_t \phi_t$ as a martingale w.r.t $\left\{\mathcal{F}_t\right\}_{t=0}^\infty$.
\begin{lemma} [Self-Normalized Bound \cite{YadkoriPS11}]
    \label{lemma:selfnormalizedbound}
    Let Assumption \ref{assumption:noisebound} hold. For a $D\times D$ positive definite matrix $K_0$ and any integer $t\ge 0$, $K_t$ updates according to \eqref{eqn:updateonline}. Then, for any $\tilde\delta$ with probability at least $1-\tilde{\delta}$, there is
    \begin{equation}
        \Vert S_t \Vert^2_{K_t} \leq 2\sigma_0^2 \log\left( \frac{\det(K_t)^{-\frac{1}{2}} \det(K_0)^{\frac{1}{2}}}{\tilde\delta} \right), \forall t\ge0.
    \end{equation}
\end{lemma}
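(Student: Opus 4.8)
The statement is the classical self-normalized tail inequality of Abbasi-Yadkori, P\'al and Szepesv\'ari \cite{YadkoriPS11}, here transcribed into the precision/covariance notation of the ABLR model: the weighting matrix $K_t$ is precisely the inverse of the regularized Gram matrix $K_0^{-1} + \sum_s \phi_s\phi_s^\top$, so $\Vert S_t\Vert_{K_t}^2$ is the usual ``self-normalized'' quadratic form measured in the inverse-design-matrix metric. The plan is to prove it by the \emph{method of mixtures} (pseudo-maximization): rather than controlling $\Vert S_t\Vert_{K_t}$ direction by direction, I would build a one-parameter family of exponential supermartingales indexed by $\lambda\in\Real^D$, average them against a Gaussian measure, and evaluate the resulting integral in closed form so that the self-normalized quantity and the determinant ratio emerge exactly.

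First I would fix $\lambda\in\Real^D$ and define
\[
M_t^\lambda = \exp\!\left( \frac{1}{\sigma_0}\lambda^\top S_t - \frac{1}{2}\lambda^\top\Big(\textstyle\sum_{s=1}^t \phi_s\phi_s^\top\Big)\lambda \right), \qquad M_0^\lambda = 1 .
\]
Forming the one-step ratio $M_t^\lambda/M_{t-1}^\lambda = \exp\!\big(\eta_t(\lambda^\top\phi_t)/\sigma_0 - \tfrac12(\lambda^\top\phi_t)^2\big)$ and conditioning on $\mathcal{F}_{t-1}$, the factor $\exp(-\tfrac12(\lambda^\top\phi_t)^2)$ is $\mathcal{F}_{t-1}$-measurable since $\phi_t$ is, so Assumption \ref{assumption:noisebound} applied with the scalar $\lambda^\top\phi_t/\sigma_0$ yields $\mathbb{E}[\exp((\lambda^\top\phi_t/\sigma_0)\eta_t)\mid\mathcal{F}_{t-1}]\le \exp(\tfrac12(\lambda^\top\phi_t)^2)$. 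The two quadratics cancel, giving $\mathbb{E}[M_t^\lambda\mid\mathcal{F}_{t-1}]\le M_{t-1}^\lambda$, so each $M_t^\lambda$ is a nonnegative supermartingale with unit initial value.

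Next I would mix over the direction. Let $h$ be the $\mathcal{N}(0,K_0)$ density and set $\bar M_t = \int_{\Real^D} M_t^\lambda\, dh(\lambda)$. By Tonelli (all integrands are nonnegative) conditional expectation commutes with the $\lambda$-integral, so $\bar M_t$ inherits the supermartingale property with $\bar M_0 = 1$. The key computation is the Gaussian integral: combining the exponent of $M_t^\lambda$ with the $-\tfrac12\lambda^\top K_0^{-1}\lambda$ coming from $h$ produces $-\tfrac12\lambda^\top K_t^{-1}\lambda + \tfrac1{\sigma_0}\lambda^\top S_t$ with $K_t^{-1}=K_0^{-1}+\sum_s\phi_s\phi_s^\top$, and completing the square gives the closed form
\[
\bar M_t = \frac{\det(K_t)^{1/2}}{\det(K_0)^{1/2}}\,\exp\!\left( \frac{1}{2\sigma_0^2}\Vert S_t\Vert_{K_t}^2 \right).
\]
This is exactly where $\Vert S_t\Vert_{K_t}^2$ and the determinant ratio appear. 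I regard this mixing step as the main obstacle: naively optimizing over $\lambda$ inside the supermartingale would destroy the supermartingale structure, whereas integrating against the prior both preserves that structure (via Tonelli) and reproduces the precise constant.

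Finally I would convert the supermartingale estimate into a time-uniform tail bound. Since $\bar M_t$ is a nonnegative supermartingale with $\bar M_0 = 1$, a maximal inequality for nonnegative supermartingales (Ville's inequality) gives $\mathbb{P}(\sup_{t\ge0}\bar M_t \ge 1/\tilde\delta)\le \tilde\delta$. On the complementary event, of probability at least $1-\tilde\delta$, one has $\bar M_t < 1/\tilde\delta$ for all $t$; substituting the closed form and taking logarithms rearranges to $\Vert S_t\Vert_{K_t}^2 \le 2\sigma_0^2\log\big(\det(K_t)^{-1/2}\det(K_0)^{1/2}/\tilde\delta\big)$ uniformly in $t$, which is the claim. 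The only bookkeeping point is to align the index range of the Gram sum appearing in $M_t^\lambda$ with the definition of $K_t$ in \eqref{eqn:updateonline}; this off-by-one in the paper's indexing convention does not affect the argument.
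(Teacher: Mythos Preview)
Your proof is correct and is precisely the method-of-mixtures argument of Abbasi-Yadkori, P\'al and Szepesv\'ari that the lemma cites; the paper itself does not supply a proof of this lemma but simply imports it from \cite{YadkoriPS11}, so there is nothing further to compare. Your remark about the off-by-one in the indexing of $K_t$ versus the Gram sum is well taken and indeed harmless for the argument.
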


\begin{proof}[Proof of Proposition \ref{proposition:confidenceellipsoid}]
    With Assumption \ref{assumpstion:capacity}, The predicted mean of $\mathbf{w}$ can be rewritten as
    \begin{align}
        \bm{\mu}_t = & K_{t-1} \left( \Phi_{t-1} \mathbf{y}_{t-1} + K_0^{-1} \bm{\mu}_0\right) \nonumber\\
        = & K_{t-1} \left( \Phi_{t-1} (\Phi_{t-1}^\top \mathbf{w}^* + \bm{\eta}) + K_0^{-1} \bm{\mu}_0\right) \nonumber\\
        =& K_{t-1} \left( K_{t-1}^{-1} \mathbf{w}^* + \Phi_{t-1} \bm{\eta} + K_0^{-1} (\bm{\mu}_0- \mathbf{w}^*)\right)\nonumber\\
        =& \mathbf{w}^* + K_{t-1}\Phi_{t-1}\eta + K_{t-1}K_0^{-1}\left( \bm{\mu}_0 - \mathbf{w}^*\right),
        \label{eqn:proof_proposition1}
    \end{align}
    where $\bm{\eta}_{t-1} = \left[ \eta_1,\dots, \eta_{t-1} \right]^\top$ denotes the observation noise during sampling and estimation. Denoting $\mathbf{\tilde w}_i = \bm{\mu}_i - \mathbf{w}^*$, subtracting $\mathbf{w}^*$ on both side of \eqref{eqn:proof_proposition1}, and then left multiplying a vector $p \in \mathbb{R}^D$, we can get
    \begin{equation}
        \vert p^\top \mathbf{\tilde w}_t  \vert \leq   \Vert p \Vert_{K_{t-1}} \big(  \Vert \Phi_{t-1}\eta \Vert_{K_{t-1}} + \Vert K_0^{-1} \mathbf{\tilde w}_0 \Vert_{K_{t-1}} \big).
    \end{equation}
    The second term on the right-hand side is further scaled by
    \begin{equation}
        \Vert K_0^{-1}\mathbf{\tilde w}_0 \Vert^2_{K_{t-1}} \leq \frac{\lambda_{\max} ( K_{t-1} )}{\lambda_{\min}( K_0 )} \Vert \mathbf{\tilde w}_0 \Vert^2_{K^{-1}_{0}},
    \end{equation}
    By Lemma \ref{lemma:selfnormalizedbound} and Assumption \ref{assumpstion:calibration}, with probability at least $1-2\tilde\delta$ (see \cite{LewSHBP22}), the following inequality holds $\forall t\ge0$:
    \begin{equation}
        \Vert \mathbf{\tilde w}_t  \Vert_{K^{-1}_{t-1}}^2 \leq   \Vert K_{t-1}^{-1} \mathbf{\tilde w}_t  \Vert_{K_{t-1}} \Gamma^\delta_{t-1} =  \Vert \mathbf{\tilde w}_t  \Vert_{K^{-1}_{t-1}} \Gamma^\delta_{t-1} ,
    \end{equation}
    where we let $p=K_{t-1}^{-1} \mathbf{\tilde w}_t$. Furthermore, we have $\Vert \mathbf{\tilde w}_t  \Vert_{K^{-1}_{t-1}} \leq \Gamma^\delta_{t-1}$, implying $\mathbf{w}^*$ lies in the confidence ellipsoid $\mathcal{C}_t^\delta$.
\end{proof}


\begin{proof}[Proof of Theorem \ref{theorem:probabilitysafety}] 
With the on-line dataset $\mathcal{S}^t_{\omega_{K+1}}$, the predicted variance of a single test point $\hat x$ according to \eqref{eqn:predictedvarianceBLR} is
\begin{equation}
    \Sigma(\hat x) = \sigma^2_0 \left(1  + \phi(\hat x)^\top K_t \phi(\hat x)\right) \ge \sigma_0^2.
\end{equation}
Note that $K_t$ is positive definite according to \eqref{eqn:updateonline}. Therefore, we can derive the upper bound of $\Sigma_t$ by
\begin{equation}
    \Sigma(\hat x) \leq \sigma^2_0 + \sigma^2_0 \lambda_{\max}(K_t) \Vert \phi(\hat x) \Vert^2
\end{equation}
Thus, the predicted standard deviation satisfies
\begin{align}
    \tilde\sigma(\hat x) = \sqrt{\Sigma(\hat x)} \leq & \sigma_0 \sqrt{1 + \lambda_{\max}(K_t) \Vert \phi(\hat x) \Vert^2} \nonumber \\
    \leq & \sigma_0 \left(1 + \sqrt{\lambda_{\max}(K_t)}\Vert \phi(\hat x) \Vert  \right). \label{eqn:theoremsigmabound}
\end{align}
Replacing $\tilde\sigma_t$ in \eqref{eqn_pcbfQP_cbf} by the above upper bound, we have
\begin{align}
    & \mathscr{L}_f h_\omega(\hat x)  + \mathscr{L}_g h_\omega(\hat x) u  + \tilde\mu(\hat x) \nonumber\\ 
    & - \beta \sigma_0 \left(1 + \sqrt{\lambda_{\max}(K_t)}\Vert \phi(\hat x) \Vert  \right) + \alpha \left(h_\omega(\hat x)\right) \ge 0. \label{eqn:theoremuppersigmaCBF}
\end{align}
Note that satisfying \eqref{eqn:theoremuppersigmaCBF} is sufficient to also meet \eqref{eqn_pcbfQP_cbf} for any input $u$. We next study the relationship between the upper bound of standard deviation \eqref{eqn:theoremsigmabound} and the probabilistic error bound in \eqref{eqn:weightupperbound}.

On the other hand, according to the adaptive CBF methods \cite{TaylorA20, BlackAP21, WangLWSZH23}, the upper error bound in \eqref{eqn:weightupperbound} can be considered in vanilla CBF method to ensure robustness against uncertainty. This results in the following adaptive CBF constraint:
\begin{align*}
     \mathscr{L}_f h_\omega(x_t)   &+ \mathscr{L}_g h_\omega(x_t) u   + \bm{\mu}_t^\top\phi(x_t) \nonumber \\
    & -  \sup_{\mathbf{w} \in \mathbb{R}^D}\Vert \mathbf{w} - \bm{\mu}_t \Vert \Vert \phi(x_t) \Vert + \alpha \left(h_\omega(x_t)\right) \ge 0.
\end{align*}
Note that we do not consider the variation of this upper bound for better adaptation as \cite{TaylorA20, BlackAP21, WangLWSZH23}, but only a conservative upper bound as the worst-case CBF constraint \cite{Jankovic18}. Note also that the worst-case CBF constraint is sufficient to ensure safety. According to Proposition \ref{proposition:confidenceellipsoid}, with probability at least $1 - \tilde \delta$, 
 \begin{equation}
      \frac{1}{\sqrt{\lambda_{\max}(K_t)}} \Vert \mathbf{w}^* - \bm{\mu}_t \Vert \leq \Vert \mathbf{w}^* - \bm{\mu}_t \Vert_{K_t^{-1}} \leq  \sigma_0 \Gamma_t^{\tilde \delta}.
\end{equation}
Therefore, we have with probability at least $1 - \tilde \delta$, the following constraint is sufficient to render safety:
\begin{align*}
    \mathscr{L}_f h_\omega(x_t) & + \mathscr{L}_g h_\omega(x_t) u   + \bm{\mu}_t^\top\phi(x_t) \nonumber \\
    & - \sigma_0 \sqrt{\lambda_{\max} (K_t)}\Gamma_t^{\tilde \delta} \Vert \phi(x_t) \Vert + \alpha \left(h_\omega(x_t)\right) \ge 0.
\end{align*}
Comparing this with \eqref{eqn:theoremuppersigmaCBF}, we can take $\beta \ge \Gamma_t^{\tilde \delta}$ to obtain that \eqref{eqn:theoremuppersigmaCBF} renders safety  with probability at least $1 - \tilde \delta$.
\end{proof}

\section*{Acknowledgment}
This work was supported by NPRP grant: NPRP 9-466-1-103 from Qatar National Research Fund, UKRI Future Leaders Fellowship (MR/S017062/1, MR/X011135/1), NSFC (62076056), EPSRC (2404317), Royal Society (IES/R2/212077) and Amazon Research Award.
{
\bibliographystyle{IEEEtran}
\bibliography{IEEEabrv, mybib}
}

\end{document}